\newtheorem{theorem}{Theorem}[section]
\newtheorem{proposition}[theorem]{Proposition}
\theoremstyle{definition}
\newtheorem{definition-lemma}[theorem]{Definition-Lemma}
\newtheorem{definition-theorem}[theorem]{Definition-Theorem}
\newtheorem*{ack}{Acknowledgements}
\newcommand{\C}{\mathbb{C}}
\newcommand{\R}{\mathbb{R}}
\newcommand{\Z}{\mathbb{Z}}
\begin{document}
\title{\textbf{Monodromy of Complexified Planar Kepler Problem}}

\author{Shanzhong Sun\thanks{Partially supported by NSFC (No. 11771303), Email: sunsz@cnu.edu.cn}}
\author{Peng You\thanks{Partially supported by NSFC (No. 11771303), Email: you-peng@163.com}}

\renewcommand\Affilfont{\small}

\affil{Department of Mathematics, Capital Normal University, Beijing 100048 P. R. China}

\date{}

\maketitle

\begin{abstract}
The planar Kepler problem is complexified and we show that this holomorphic completely integrable Hamiltonian system has nontrivial monodromy.
\end{abstract}

\begin{section}
{Introduction}
\end{section}

The Kepler problem has been studied from a variety of different perspectives and many beautiful structures continue to be unraveled even recently. However there are still mysteries to be clarified. We try to approach this subject by complexifying the problem.

If a material point moves in plane $\R^2$ with coordinates $(x,y)$, it is convenient to identify $\R^2$ with the Gauss complex plane $\C$ and write it in terms of complex variable $z:=x+iy\in\C$.
It is an old and nice idea to use complex variables in the planar $N$-body problem especially in the planar Kepler problem. In fact, Cauchy developed his theory of functions of one complex variable in order to study the Kepler equation. Here by complexified mechanical problem, we mean the coordinates of the mass point are $(x,y)\in \C^2$ instead of $\R^2$.

Recently, Behtash et al (\cite{BDSSU}) argued that in the semi-classical analysis of path integrals of supersymmetric quantum field theory and quantum mechanics, both the complexification of the action and the measure of the theory are necessary to get right nonperturbative structure of the model. It is obvious and promising from this viewpoint that we need to  complexify the Newton equations in classical mechanics to get holomorphic Newton equations in complexified configuration spaces. They claim that complexified solutions including both multi-valued and even singular ones to holomorphic Newton equations are responsible for consistency of the semi-classical theory, interesting physical properties like hidden topological angles and remarkable features to the resurgent cancellation of ambiguities to get exact solutions. There is also a huge literature on the complexification of path integration for various physical motivations (see \cite{BDSSU} and references therein).

In fact this is an idea dating back at least to Balian-Bloch (\cite{BB}) as a computational tool to get the finer structures of the spectral problems of the Schr\"{o}dinger operators in quantum mechanics, and please refer to Pham (\cite{Ph}) for the further developments and mathematical justifications.

Back to $N$-body problem in celestial mechanics, Albouy and Kaloshin (\cite{AK}) proved the generic finiteness of central configurations for planar $N$-body problem by complexifying each real coordinate and using complex algebraic geometry.

Inspired by these progresses in quantum physics and mathematics, it seems natural to develop complexified classical mechanics and to investigate the holomophic Newton equations. The relation to real Newton equations and their physical implications would be interesting problems. Other than the harmonic oscillator, the first example coming to our minds is the complexified Kepler problem and this is one of the main motivations of the current work.

Once we got this idea, we check it in the mathematical literature. Unfortunately we only find the following references: Beukers-Cushman(\cite{BeC}), Bates-Cushman(\cite{BaC}) and Bender-Hook-Kooner(\cite{BHK}) with related references therein. In the latter the authors proposed through extensive numerical studies that complex classical systems can exhibit tunneling-like behavior as in quantum mechanics.

As shown by concrete examples appearing in quantum molecular spectroscopy, monodromies of singularities in completely integrable Hamiltonian systems have quantum effects (please refer to Cushman et al(\cite{CDGHJLSZ}) as an interesting example, and see also \cite{Ch} to get a flavor of the subject).

We prove that the complexified Kepler problem is a completely integrable holomorphic Hamiltonian system and the period lattice of the regular fibers of the energy-momentum map and the monodromy group are determined which is a new phenomenon compared with the traditional real Kepler problem.

The paper is organized as follows. We recall the classical planar Kepler problem with some facts used later in \S 2. The problem is complexified in \S 3, and it becomes a holomorphic Hamiltonian system. The Hamiltonian action of $SO(2,\C)$ and the geometrical properties of the energy-momentum mapping are established in \S 3. In \S 4, we derive the period lattice on each smooth fiber of the energy-momentum mapping by rewriting the holomorphic symplectic form to get holomorphic $1$-forms $\omega_1,\,\omega_2$ and considering the generalized Abel-Jacobi map.  The main result about the monodromy of the complexified planar Kepler problem  is obtained in \S 5, we show that the problem has nontrivial monodromy which is different from the classical Kepler problem. Finally, in \S 6 some possible directions to study further are discussed.

\begin{ack}
We would like to thank Yong Li, Volodya Roubtsov and David Sauzin for their valuable suggestions.

\end{ack}

\begin{section}
{Classical planar Kepler problem}
\end{section}
We recall the classical planar Kepler Hamiltonian system first. On the phase space $T_0\R^2=(\R^2\setminus\{0\})\times\R^2$ with coordinates $(x_1,x_2,y_1,y_2)$ and symplectic form $\omega=dx_1\wedge dy_1+dx_2\wedge dy_2$, consider the Kepler Hamiltonian
\[ H:T_0\R^2\rightarrow\R; (x_1,x_2,y_1,y_2)\mapsto\frac{1}{2}(y_1^2+y_2^2)-\frac{1}{\sqrt{x_1^2+x_2^2}}.\] The integral curves of the Hamiltonian vector field $X_H$ on the $T_0\R^2$ satisfy the equations
\begin{align*}
\dot{x}&=y \\
\dot{y}&=-x\|x\|^{-3},
\end{align*}
where $x=(x_1,x_2),y=(y_1,y_2),\|x\|=\sqrt{x_1^2+x_2^2}$.

The Hamiltonian vector field $X_H$ has well known integrals: the Hamiltonian, i.e., the total energy \[H(x_1,x_2,y_1,y_2)=\frac{1}{2}(y_1^2+y_2^2)-\frac{1}{\sqrt{x_1^2+x_2^2}}\] and the angular momentum \[J(x_1,x_2,y_1,y_2)=x_1y_2-x_2y_1.\]

A Hamiltonian system $(M,\omega,f_1)$ on a smooth symplectic manifold $(M,\omega)$ of dimension $2n$ is a Liouville integrable system with $n$-degrees of freedom if there are Poisson commuting functions $(f_1,...,f_n)$, that is, $\{f_i,f_j\}=0$, whose differentials are linearly independent on an open dense subset $W$ of $M$ and whose associated Hamiltonian vector fields $X_{f_i}$ are complete. By definition, $(T_0\R^2,\omega,H,J)$ is a Liouville integrable system.

\begin{section}
{Complexified planar Kepler problem}
\end{section}

In this section, we describe the complex Hamiltonian system given by the complexified planar Kepler problem and its energy-momentum map.

Let's start from the complexified planar Kepler problem. As a first step, we are required to calculate the square roots of $x_1^2+x_2^2$ in the Kepler Hamiltonian $H$. As in complex analysis we construct the Riemann surface of $\sqrt{z}$ as $\sqrt{z}$'s maximal domain to remove the multivaluedness, we "double the $(x_1,x_2)$ configuration space" to make the function $\sqrt{x_1^2+x_2^2}$ single valued and holomorphic.

We take two copies of $\C^2\setminus\{(x_1,x_2)\in\C^2:x_1^2+x_2^2=0\}$, one is denoted by $Q_{I}$ and the other by $Q_{II}$. For convenience purpose we introduce the new coordinates $\xi_1=x_1+ix_2$, $\xi_2=x_1-ix_2\in\C$. Cut $Q_I$ and $Q_{II}$ along $x_1^2+x_2^2=\xi_1\xi_2=t\in\R_{<0}$ as follows: for each fixed $\xi_2\neq 0$, we get two copies of $\C\setminus\{0\}$ in $Q_{I}$ and $Q_{II}$ respectively denoted simply by $I$ and $II$, cut both $I$ and $II$ along the half line $\xi_1=t\xi_2^{-1}$, $t<0$ to get $I_\pm$ and $II_\pm$. Glue $I_+$ with $II_-$ and $I_-$ with $II_+$ respectively as shown in Figure \ref{fig_a}.
\begin{figure}
\centering
\includegraphics[scale=0.6]{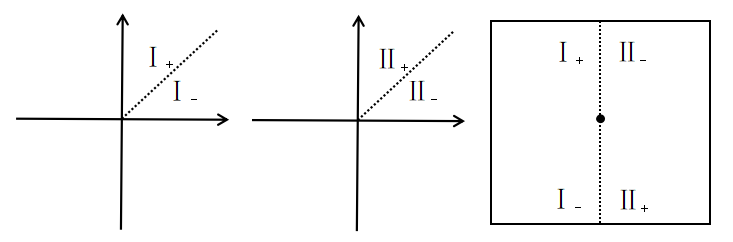}
\caption{Glue $I$ and $II$ together for each fixed $\xi_2\neq 0$.}\label{fig_a}
\end{figure}
Then let $\xi_2$ vary, we obtain the glued space $Q=Q_I\coprod Q_{II}$. In the following we will use subscript $k$ in $(x_1,x_2)_k$ to indicate which $Q_k$, $k=I,II$ the point $(x_1,x_2)$ is in.

The configuration space $Q$ is a complex manifold of dimension $2$. The phase space $M=T^*Q$ is a trivial bundle. Furthermore $M$ is a holomorphic symplectic manifold with the canonical symplectic form $\omega$ which in natural coordinate $(x_1,x_2,y_1,y_2)$ can be written as \[\omega =dx_1\wedge dy_1+dx_2\wedge dy_2.\] Now the Hamiltonian $H=\frac{1}{2}(y_1^2+y_2^2)-\frac{1}{\sqrt{x_1^2+x_2^2}}$ makes sense on $M$ and is holomorphic. The complexified planar Kepler problem becomes a holomorphic Hamiltonian system $(M,\omega,H)$.

The group \[SO(2,\C)=\{A=
\begin{pmatrix}
a & -b \\ b & a
\end{pmatrix}
\in GL(2,\C):a^2+b^2=1\}\] which preserves $H$, acts on $M$ by \[\tilde{\Phi}:SO(2,\C)\times M\rightarrow M; (A,((x_1,x_2)_k,(y_1,y_2)))\mapsto ((A
\begin{pmatrix}
x_1 \\ x_2
\end{pmatrix}
)_k, A
\begin{pmatrix}
y_1 \\ y_2
\end{pmatrix}
), k=I,II.\]
As in the usual planar Kepler problem, the action $\tilde\Phi$ is a Hamiltonian action of $SO(2,\C)$ on the holomorphic symplectic manifold $(M,\omega)$ with moment map the (complexified) angular momentum \[J(x_1,x_2,y_1,y_2)=x_1y_2-x_2y_1.\] Similar to the real case, we can prove that $(M,\omega, H,J)$ is a Liouville completely integrable system in the holomorphic sense.

It is convenient to use the coordinates $(\xi_1,\xi_2,\eta_1,\eta_2)$ on $M$ where \[\xi_1=x_1+ix_2,\ \xi_2=x_1-ix_2,\ \eta_1=y_1+iy_2,\ \eta_2=y_1-iy_2.\] Note that $\xi_1\neq 0$ and $\xi_2\neq 0$ because $\xi_1\xi_2=x_1^2+x_2^2\neq 0$. In terms of these new coordinates the $SO(2,\C)$-action $\tilde{\Phi}$ becomes the $\C^*$-action \[\Phi:\C^*\times M\rightarrow M:(\lambda,((\xi_1,\xi_2)_k,(\eta_1,\eta_2))\mapsto ((\lambda\xi_1,\lambda^{-1}\xi_2)_k,(\lambda\eta_1,\lambda^{-1}\eta_2)),\ k=I,II,\] Where $\lambda=a+ib$, $A=
\begin{pmatrix}
a & -b \\ b & a
\end{pmatrix}
\in SO(2,\C)$. The action of a group element $\lambda$ on a point $P$ is usually written as $\Phi_{\lambda}(P)$.

We have
\begin{proposition}\label{groupaction}
1. The $\C^*$-action $\Phi$ on $M$ is free and proper.

2. The map \[\pi:M\rightarrow \C^4; ((\xi_1,\xi_2)_k,(\eta_1,\eta_2)) \mapsto (c_1,c_2,c_3,c_4),k=I,II.\] is the quotient map $M\rightarrow M/\C^*$ of the $\C^*$-action $\Phi$, where
\[c_1=
\begin{cases}
\sqrt{\xi_1\xi_2}&(\xi_1,\xi_2)_I\in Q_I\\
-\sqrt{\xi_1\xi_2}&(\xi_1,\xi_2)_{II}\in Q_{II}
\end{cases}
,\ c_2=\xi_1\eta_2,\ c_3=\frac{i}{2}(\xi_1\eta_2-\xi_2\eta_1),\ c_4=\frac{1}{2}\eta_1\eta_2-\frac{1}{c_1}.\]

3. The image of the map $\pi$ is the subset $R$ of $\C^4$ defined by \[\{(c_1,c_2,c_3,c_4)\in\C^4:2c_4c_1^2+2c_1=c_2^2+2ic_3c_2, c_1\neq0\}.\]
Note that $c_4$ and $c_3$ are just the Hamiltonian and the angular momentum in terms of these new coordinates.

4. The map $\pi:M\rightarrow R$ is the projection map of a $\C^*$-bundle over $R$.
\end{proposition}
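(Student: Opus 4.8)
The plan is to trivialize everything explicitly; all four assertions then come out together. First I would unwind the doubling of \S3: the cut-and-glue construction identifies $Q$ biholomorphically with the smooth affine surface $\{(\xi_1,\xi_2,c_1)\in(\C^*)^3:c_1^2=\xi_1\xi_2\}$, on which $c_1$ is the globally single-valued branch of $\sqrt{\xi_1\xi_2}$ with the sign prescribed in the statement ($+$ on $Q_I$, $-$ on $Q_{II}$). Since $c_1$ never vanishes, $(\xi_1,c_1)$ is a \emph{global} holomorphic chart on $Q$ (its inverse is $(\xi_1,c_1)\mapsto(\xi_1,c_1^2/\xi_1,c_1)$), so $Q\cong\C^*\times\C^*$, and $T^*Q$ being trivial, $M\cong\C^*_{\xi_1}\times\C^*_{c_1}\times\C^2_{(\eta_1,\eta_2)}$ with $\xi_2=c_1^2/\xi_1$. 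In these coordinates $\Phi$ reads $\lambda\cdot(\xi_1,c_1,\eta_1,\eta_2)=(\lambda\xi_1,c_1,\lambda\eta_1,\lambda^{-1}\eta_2)$, whence freeness is immediate ($\lambda\xi_1=\xi_1$ with $\xi_1\neq0$ forces $\lambda=1$); properness I would get from the trivialization below.

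Next, directly from the formulas I would check that $\pi$ is $\C^*$-invariant (each of $c_1$, $c_2=\xi_1\eta_2$, $c_3=\tfrac i2(\xi_1\eta_2-\xi_2\eta_1)$, $c_4=\tfrac12\eta_1\eta_2-c_1^{-1}$ is manifestly fixed by the action just written) and that $\pi(M)\subseteq R$ — the latter is the single identity $2c_1^2c_4+2c_1=c_2^2+2ic_2c_3$ obtained by substituting the definitions and $\xi_1\xi_2=c_1^2$. The crux is then to write down a trivialization $\Psi\colon R\times\C^*\to M$ by
\[
\Psi\bigl((c_1,c_2,c_3,c_4),\xi_1\bigr)=\bigl(\,\xi_1,\ c_1,\ (c_2+2ic_3)\,\xi_1 c_1^{-2},\ c_2\xi_1^{-1}\,\bigr)
\]
in the coordinates of the previous paragraph ($c_4$ does not appear, being determined on $R$ by $c_1,c_2,c_3$). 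One verifies by direct computation that $\pi\circ\Psi=\mathrm{pr}_R$, that $\Psi$ is $\C^*$-equivariant when $\C^*$ acts on $R\times\C^*$ by translation in the second factor, and that $\Psi$ is biholomorphic with inverse $p\mapsto(\pi(p),\xi_1(p))$; the only real content of the inversion is that on a fibre of $\pi$ one recovers $\eta_2=c_2/\xi_1$ and $\eta_1=(c_2+2ic_3)/\xi_2$ from $(c_2,c_3)$, and the equation for $c_4$ then holds automatically exactly because the chosen point of $R\times\C^*$ lies on $R$.

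With $\Psi$ in hand all four parts follow. Surjectivity of $\pi$ onto $R$ (from $\pi\circ\Psi=\mathrm{pr}_R$) together with $\pi(M)\subseteq R$ gives part 3; the equivariant biholomorphism $\Psi$ exhibits $\pi\colon M\to R$ as the trivial holomorphic $\C^*$-bundle $R\times\C^*\to R$, which is part 4, and also shows the action to be $\C^*$ translating its own factor, hence free and proper, completing part 1; and the fibres of $\pi$ being precisely the $\C^*$-orbits, $\pi$ descends to a biholomorphism $M/\C^*\xrightarrow{\ \sim\ }R$, i.e.\ $\pi$ is the quotient map, which is part 2. (Incidentally $R\cong\C^*\times\C^2$, the graph of $c_4$ over $(c_1,c_2,c_3)$, so $M/\C^*$ is a smooth complex $3$-fold.) The single step needing genuine care is the first one: checking that the \S3 cut-and-glue recipe does reproduce that affine surface — equivalently that the sign convention for $c_1$ on the two copies is compatible with the gluing along $\xi_1\xi_2\in\R_{<0}$, so that $(\xi_1,c_1)$ really is a global chart. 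After that the argument is short, explicit, polynomial algebra; in particular the abstract ``free $+$ proper $\Rightarrow$ smooth quotient'' theorem is not needed, since the quotient and the bundle are produced by hand.
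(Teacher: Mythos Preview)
Your argument is correct and complete; the computational core overlaps with the paper's, but the organization is genuinely different. The paper verifies the four assertions serially: freeness by the same one-line check you give; properness by a direct compactness argument on $L=\{\lambda:K\cap\Phi_\lambda K\neq\emptyset\}$ using $\rho_1<|\xi_i|<\rho_2$ on a compact $K$; the fibre $=$ orbit statement by showing any two preimages of $(c_1,c_2,c_3,c_4)$ differ by a unique $\lambda$; surjectivity onto $R$ by exhibiting the single preimage $((1,c_1^2)_k,((c_2+2ic_3)/c_1^2,c_2))$ with the sheet $k$ chosen according to $\arg c_1$; and finally part 4 by \emph{invoking} the general theorem that a free proper $\C^*$-action yields a $\C^*$-bundle. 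Your route instead first globalizes the sheet bookkeeping by identifying $Q$ with the affine surface $\{c_1^2=\xi_1\xi_2\}\subset(\C^*)^3$ (so that $(\xi_1,c_1)$ is a global chart and the awkward case split on $\arg c_1$ disappears), and then writes down the single equivariant biholomorphism $\Psi:R\times\C^*\to M$, from which all four parts --- including properness --- are read off simultaneously. What you gain is that the bundle is exhibited as \emph{trivial} (a fact the paper only uses implicitly later, in \S4, when it writes $M_c\cong R_c\times\C^*$), and you avoid appealing to the abstract free-and-proper theorem. What the paper's approach buys is that it does not need your preliminary step of matching the cut-and-glue construction of $Q$ to the affine model; you rightly flag this as the one place needing genuine care, and it is worth spelling out in a final write-up.
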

\begin{proof}
1. If $((\lambda\xi_1,\lambda^{-1}\xi_2)_k,(\lambda\eta_1,\lambda^{-1}\eta_2))=((\xi_1,\xi_2)_k,(\eta_1,\eta_2))$, then $\lambda =1$ for $\xi_1\neq 0$. Hence $\C^*$ acts freely on $M$. The action $\Phi$ is proper if and only if $L=\{\lambda\in\C^*:K\cap\Phi_\lambda (K)\neq\emptyset\}$ is compact for every compact set $K$ in $M$. Since $K$ is compact, there exist two positive numbers $\rho_1<\rho_2$ such that $\rho_1 <|\xi_1|,\ |\xi_2|<\rho_2$ on $K$. For every $\lambda\in L$, there is a point $((\xi_1,\xi_2)_k,(\eta_1,\eta_2))$ such that $((\lambda\xi_1,\lambda^{-1}\xi_2)_k,(\lambda\eta_1,\lambda^{-1}\eta_2))$ is also in $K$. Because $|\xi_1|,\ |\xi_2|<\rho_2$ and $\rho_1 <|\lambda\xi_1 |,\ |\lambda^{-1}\xi_2 |$,  we get $\frac{\rho_1}{\rho_2}<|\lambda|<\frac{\rho_2}{\rho_1}$. If $\lambda_n(\in L)\rightarrow\lambda$, then there exist $P_n\in K$ such that $\Phi_{\lambda_n}(P_n)\in K$. Since $K$ is compact, we have a subsequence of $\{P_n\}_{n\geq 1}$ converging to a point $P\in K$, we still use $\{P_n\}_{n\geq 1}$ to denote this subsequence. Because of the continuity of $\Phi$ and compactness of $K$, $\Phi_{\lambda}(P)\in K$. Then $L$ is compact.

2. From the definition of the map $\pi$, the following equalities hold for points of $M$
\begin{align*}
\xi_1\xi_2&=c_1^2;\\
\xi_1\eta_2&=c_2;\\
\xi_2\eta_1&=c_2+2ic_3;\\
\eta_1\eta_2&=2c_4+\frac{2}{c_1}.
\end{align*}
Let $(c_1,c_2,c_3,c_4)$ be in the image of the map $\pi$. If two points $((\xi_1,\xi_2)_k,(\eta_1,\eta_2))$ and $((\tilde{\xi}_1,\tilde{\xi}_2)_{\tilde{k}},(\tilde{\eta}_1,\tilde{\eta}_2))$ are mapped to the same $(c_1,c_2,c_3,c_4)$, we have $k=\tilde{k}$, $\xi_1\xi_2=\tilde{\xi}_1\tilde{\xi}_2$, $\xi_1\eta_2=\tilde{\xi}_1\tilde{\eta}_2$, $\xi_2\eta_1=\tilde{\xi}_2\tilde{\eta}_1$. Then there exists a unique $\lambda\in\C^*$ such that $\tilde{\xi}_1=\lambda\xi_1$, $\tilde{\xi}_2=\lambda^{-1}\xi_2$, $\tilde{\eta}_1=\lambda\eta_1$, $\tilde{\eta}_2=\lambda^{-1}\eta_2$. Noting that $c_i (i=1,2,3,4)$ is invariant under the $\C^*$-action $\Phi$, we see that the inverse image of $(c_1,c_2,c_3,c_4)$ under the map $\pi$ is a single $\C^*$-orbit.

3. Observe that the identity $(\xi_1\xi_2)(\eta_1\eta_2)=(\xi_1\eta_2)(\xi_2\eta_1)$ gives rise to
\begin{equation}
2c_4c_1^2+2c_1=c_2^2+2ic_3c_2,\ c_1\neq 0.\label{image}
\end{equation}
For $(c_1,c_2,c_3,c_4)$ satisfying the condition (\ref{image}), the point
\begin{equation*}
((1,c_1^2)_k,(\frac{c_2+2ic_3}{c_1^2},c_2)),\ k=
\begin{cases}
I&-\frac{\pi}{2}<\arg c_1\leq\frac{\pi}{2}\\
II&\frac{\pi}{2}<\arg c_1\leq\frac{3\pi}{2}
\end{cases}
\end{equation*}
is a preimage of $(c_1,c_2,c_3,c_4)$ under the quotient map $\pi$.

4. It is a general fact that if $\C^*$ acts freely and properly on a complex manifold $N$, then the quotient map $N\rightarrow N/\C^*$ is the projection map of a $\C^*$-bundle over $N/\C^*$. In particular, this holds when $N=M$.
\end{proof}

For any fixed value $c=(c_3,c_4)$, we consider the algebraic curve \[2c_4c_1^2+2c_1=c_2^2+2ic_3c_2.\] It is easy to check that the curve is smooth if and only if $c$ does not lie in the discriminant $\Delta$: \[\Delta:=\{(c_3,c_4)\in \C^2:1+2c_4c_3^2=0\}.\]

We slightly change the form of the algebraic curve as
\begin{equation}
2c_4c_1^2+2c_1-c_3^2=(c_2+ic_3)^2,\label{curve}
\end{equation}
which falls into one of the three cases:

1. If $c_4=0$, the curve is in the form \[2c_1-c_3^2=(c_2+ic_3)^2,\] which is homeomorphic to $\C$.

2. If $c_4\neq0$ and the discriminant of the left hand side of (\ref{curve}) $4+8c_4c_3^2\neq0$, the curve is in the form \[2c_4(c_1-r)(c_1-s)=(c_2+ic_3)^2,r\neq s,\] which is homeomorphic to $\C^*$.

3. If $c_4\neq0$ and the discriminant of the left hand side of (\ref{curve}) $4+8c_4c_3^2=0$, the curve is in the form \[2c_4(c_1+\frac{1}{2c_4})^2=(c_2+ic_3)^2,\] which is homeomorphic to a conical surface.

In the first two cases, the curve is smooth. In the third case, the curve is singular.

For any fixed value $c=(c_3,c_4)$, we define $R_c$ as the algebraic curve with points such that $c_1=0$ removed \[\{(c_1,c_2)\in\C^2:2c_4c_1^2+2c_1=c_2^2+2ic_3c_2\}\setminus\{(0,0),(0,-2ic_3)\}\] and $M_c$ as the $c$-level set of the energy-momentum map \[EM:M\rightarrow\C^2;((\xi_1,\xi_2)_k,(\eta_1,\eta_2))\mapsto (c_3,c_4),\ k=I,II,\] where $c_3$, $c_4$ are defined in Proposition \ref{groupaction}.

\begin{section}
{Period lattice}
\end{section}
In this section we obtain a period lattice for the smooth fibers of the energy-momentum map of the complexified Kepler problem.

First, we choose good coordinates on $M$ and express the symplectic form $\omega$ in these coordinates. Let $V$ be the open subset of $M$ defined by \[\{((\xi_1,\xi_2)_k,(\eta_1,\eta_2))\in M,k=I,II:c_2+ic_3\neq 0\}.\] The variables $(\xi_1,c_1,c_3,c_4)$ can be used as local coordinates at every point $p$ of $V$, because the condition $c_2+ic_3\neq 0$ implies the Jacobian of the energy-momentum map at $p$ is nonsingular. We now write the symplectic form $\omega$ in these coordinates.
\begin{proposition}
The symplectic form $\omega$ on $M$ restricted to $V$ is \[\left((\frac{i}{c_1}+\frac{-c_3}{c_1(c_2+ic_3)})dc_1+\frac{-i}{\xi_1}d\xi_1\right)\wedge dc_3+\frac{c_1}{c_2+ic_3}dc_1\wedge dc_4.\]
\end{proposition}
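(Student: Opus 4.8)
The plan is to write $\omega$ first in the linear complex coordinates $(\xi_1,\xi_2,\eta_1,\eta_2)$ and then substitute the relations recorded in Proposition~\ref{groupaction} to pass to the coordinates $(\xi_1,c_1,c_3,c_4)$ on $V$. Since $x_1=\frac12(\xi_1+\xi_2)$, $x_2=\frac1{2i}(\xi_1-\xi_2)$, $y_1=\frac12(\eta_1+\eta_2)$, $y_2=\frac1{2i}(\eta_1-\eta_2)$, a one-line computation gives $\omega=dx_1\wedge dy_1+dx_2\wedge dy_2=\frac12\bigl(d\xi_1\wedge d\eta_2+d\xi_2\wedge d\eta_1\bigr)$.

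Next I would solve the identities $\xi_1\xi_2=c_1^2$, $\xi_1\eta_2=c_2$, $\xi_2\eta_1=c_2+2ic_3$ of Proposition~\ref{groupaction} for the three functions $\xi_2,\eta_1,\eta_2$; this is legitimate because $\xi_1\neq0$, $\xi_2\neq0$ and $c_1\neq0$ on $M$, and it yields $\xi_2=c_1^2\xi_1^{-1}$, $\eta_2=c_2\xi_1^{-1}$, $\eta_1=(c_2+2ic_3)\xi_1c_1^{-2}$. Here $c_2$ is \emph{not} a new variable: the curve equation $c_2^2+2ic_3c_2=2c_4c_1^2+2c_1$ together with the condition $c_2+ic_3\neq0$ defining $V$ exhibits $c_2$, via the implicit function theorem, as a holomorphic function of $(c_1,c_3,c_4)$, and differentiating that equation gives $dc_2=\dfrac{(2c_4c_1+1)\,dc_1+c_1^2\,dc_4-ic_2\,dc_3}{c_2+ic_3}$.

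Then I would differentiate the three displayed expressions to write $d\xi_2,d\eta_1,d\eta_2$ in terms of $d\xi_1,dc_1,dc_2,dc_3$ and substitute into $\frac12(d\xi_1\wedge d\eta_2+d\xi_2\wedge d\eta_1)$. The whole point of the calculation is that almost everything cancels: the $d\xi_1\wedge dc_2$ part of $d\xi_1\wedge d\eta_2$ is killed by a term coming from $d\xi_2\wedge d\eta_1$, the squares of $d\xi_1$ and $dc_1$ of course drop out, and the two surviving $dc_1\wedge d\xi_1$ terms cancel against each other, leaving the clean intermediate expression $\omega=\frac1{c_1}\,dc_1\wedge dc_2+\frac{2i}{c_1}\,dc_1\wedge dc_3-\frac{i}{\xi_1}\,d\xi_1\wedge dc_3$. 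Inserting the formula for $dc_2$ from the previous paragraph, the $dc_1$-part of $dc_2$ drops (it is wedged with $dc_1$), which immediately produces the single term $\frac{c_1}{c_2+ic_3}\,dc_1\wedge dc_4$, and collecting the $dc_1\wedge dc_3$ terms via the elementary identity $\frac{i}{c_1}-\frac{c_3}{c_1(c_2+ic_3)}=\frac{i(c_2+2ic_3)}{c_1(c_2+ic_3)}$ gives exactly the asserted formula.

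The argument is a routine coordinate change; there is no conceptual obstacle. The only points that require care are the wedge-product bookkeeping in the third step — it helps to organize the expansion of $d\xi_2\wedge d\eta_1$ into its six terms and note which four vanish — and keeping track of the branch chosen for $c_2$, which is precisely the branch singled out by the defining condition $c_2+ic_3\neq0$ of $V$.
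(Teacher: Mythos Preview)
Your proposal is correct and follows essentially the same route as the paper's own proof: rewrite $\omega$ as $\tfrac12(d\xi_1\wedge d\eta_2+d\xi_2\wedge d\eta_1)$, eliminate $\xi_2,\eta_1,\eta_2$ via the invariants to reach the intermediate form $\tfrac1{c_1}dc_1\wedge dc_2+\tfrac{2i}{c_1}dc_1\wedge dc_3-\tfrac{i}{\xi_1}d\xi_1\wedge dc_3$, and then substitute the differential of the curve equation for $dc_2$. Your extra remarks on the implicit-function branch of $c_2$ and on the term-by-term cancellations only make explicit what the paper leaves to the reader.
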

\begin{proof}
The canonical symplectic form $\omega$ on $M$ can be written as \[\frac{1}{2}(d\xi_2\wedge d\eta_1+d\xi_1\wedge d\eta_2).\] Since \[\xi_2=\frac{c_1^2}{\xi_1},\ \eta_1=\frac{c_2+2ic_3}{\xi_2}=\frac{(c_2+2ic_3)\xi_1}{c_1^2},\ \eta_2=\frac{c_2}{\xi_1}\] on $V$, we obtain \[d\xi_2=\frac{2c_1}{\xi_1}dc_1+\frac{-c_1^2}{\xi_1^2}d\xi_1,\] \[d\eta_1=\frac{\xi_1}{c_1^2}dc_2+\frac{2i\xi_1}{c_1^2}dc_3+\frac{c_2+2ic_3}{c_1^2}d\xi_1+\frac{-2(c_2+2ic_3)\xi_1}{c_1^3}dc_1\] and \[d\eta_2=\frac{1}{\xi_1}dc_2+\frac{-c_2}{\xi_1^2}d\xi_1.\] Hence \[\omega|_V=\frac{1}{c_1}dc_1\wedge dc_2+\frac{2i}{c_1}dc_1\wedge dc_3+\frac{-i}{\xi_1}d\xi_1\wedge dc_3.\] Differentiating $c_2^2+2ic_3c_2=2c_4c_1^2+2c_1$ gives \[dc_2=\frac{2c_4c_1+1}{c_2+ic_3}dc_1+\frac{-ic_2}{c_2+ic_3}dc_3+\frac{c_1^2}{c_2+ic_3}dc_4,\] which substituted into $\omega|_V$ yields \[\omega|_V=\left((\frac{i}{c_1}+\frac{-c_3}{c_1(c_2+ic_3)})dc_1+\frac{-i}{\xi_1}d\xi_1\right)\wedge dc_3+\frac{c_1}{c_2+ic_3}dc_1\wedge dc_4.\]
\end{proof}

Let \[\omega_1=(\frac{i}{c_1}+\frac{-c_3}{c_1(c_2+ic_3)})dc_1+\frac{-i}{\xi_1}d\xi_1,\ \omega_2=\frac{c_1}{c_2+ic_3}dc_1.\] The differential forms $\omega_1$, $\omega_2$ restricted to the subset of $M_c$ where $c_2+ic_3\neq 0$ can be extended holomorphically to $M_c$. In fact, $3$-forms $\omega_1\wedge dc_3\wedge dc_4=\omega\wedge dc_4$ and $\omega_2\wedge dc_4\wedge dc_3=\omega\wedge dc_3$ are holomorphic on $M$. We also use $\omega_1$ and $\omega_2$ to denote the extended $1$-forms.

We construct the period lattice on $M_c$ as follows. Assuming $c\notin\Delta$. Fix a point $p_0\in M_c$ and consider the $Abel$-$Jacobi$ map \[AJ:M_c\rightarrow \C^2:p\mapsto (\int_{\gamma}\omega_1,\int_{\gamma}\omega_2),\] where $\gamma$ is a path from $p_0$ to $p$ on $M_c$. Since $\omega_1$ and $\omega_2$ are closed forms, the map $AJ$ depends only on the homology class of the path on $M_c$ from $p_0$ to $p$. The map $AJ$ is multivalued  since $p_0$ and $p$ can be joined by nonhomologous paths, say $\gamma$ and $\tilde{\gamma}$. The difference in the corresponding values of $AJ$ is given by a vector $(\int_{\Gamma}\omega_1,\int_{\Gamma}\omega_2)$, where $\Gamma$ is the closed path tracing out $\gamma$ first and then tracing out $\tilde{\gamma}$ backwards. Thus the multivaluedness of $AJ$ is given by vectors $(\int_{\Gamma}\omega_1,\int_{\Gamma}\omega_2)$ where $\Gamma$ ranges over $H_1(M_c,\Z)$. Let \[L_c=\{(\int_{\Gamma}\omega_1,\int_{\Gamma}\omega_2)\in \C^2:\Gamma\in H_1(M_c,\Z)\}.\] When $c_3\neq 0$, the first homology groups of $M_c$ and $R_c$ are $H_1(M_c,\Z)=\Z^4$ and $H_1(R_c,\Z)=\Z^3$ respectively.

\begin{theorem}
Let $\gamma_1$, $\gamma_2$, $\gamma_3$ be a $\Z$-basis of $H_1(R_c,\Z)$ where $\gamma_1$ is the generator of the first homology group of the smooth algebraic curve $2c_4c_1^2+2c_1=c_2^2+2ic_3c_2$. Define \[\mu=\int_{\gamma_1}\left(\frac{i}{c_1}+\frac{-c_3}{c_1(c_2+ic_3)}\right)dc_1,\ \nu=\int_{\gamma_1}\left(\frac{c_1}{c_2+ic_3}\right)dc_1.\] Then $L_c$ is a lattice in $\C^2$ called the period lattice. $L_c$ has $\Z$-rank two and is generated by the vectors \[(\mu,\nu),\ (2\pi,0).\]
\end{theorem}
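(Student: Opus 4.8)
The plan is to analyze the period lattice $L_c$ by splitting the computation of periods $\left(\int_\Gamma \omega_1, \int_\Gamma \omega_2\right)$ according to how $\Gamma \in H_1(M_c,\Z) = \Z^4$ sits relative to the fibration of $M_c$ over the curve $R_c$. First I would recall from the construction of $M$ that $M_c$ is (essentially) a $\C^*$-bundle over the smooth affine curve $R_c$, so there is an exact sequence relating $H_1$ of the $\C^*$-fiber, $H_1(M_c,\Z)=\Z^4$, and $H_1(R_c,\Z)=\Z^3$; the extra $\Z$ comes from the fiber class, which is represented by a loop $\Gamma_0$ on which $c_1,c_2,c_3,c_4$ are all constant and only $\xi_1$ winds once around $0$. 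Since $\omega_2 = \frac{c_1}{c_2+ic_3}dc_1$ has no $d\xi_1$ component, $\int_{\Gamma_0}\omega_2 = 0$; and since $\omega_1 = \left(\frac{i}{c_1}+\frac{-c_3}{c_1(c_2+ic_3)}\right)dc_1 - \frac{i}{\xi_1}d\xi_1$, the only surviving contribution on $\Gamma_0$ is $\int_{\Gamma_0}\left(-\frac{i}{\xi_1}\right)d\xi_1 = -i\cdot 2\pi i = 2\pi$. This gives the generator $(2\pi, 0)$ of $L_c$.

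Next I would handle the classes coming from $H_1(R_c,\Z)=\Z^3$. Lift a $\Z$-basis $\gamma_1,\gamma_2,\gamma_3$ of $H_1(R_c,\Z)$ to loops in $M_c$ (the lifts are well-defined up to adding multiples of $\Gamma_0$, which only changes $\int\omega_1$ by multiples of $2\pi$ and leaves $\int\omega_2$ unchanged, so modulo $(2\pi,0)$ the periods are well-defined). On any such lift, $\int\omega_1$ and $\int\omega_2$ reduce to the integrals over $\gamma_i$ of the corresponding $1$-forms on $R_c$, namely $\left(\frac{i}{c_1}+\frac{-c_3}{c_1(c_2+ic_3)}\right)dc_1$ and $\left(\frac{c_1}{c_2+ic_3}\right)dc_1$, since these forms are already pulled back from $R_c$ and the $d\xi_1$ term integrates to a multiple of $2\pi$ which we absorb. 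So I must compute these two curve-integrals over $\gamma_1,\gamma_2,\gamma_3$. The key structural point is that $R_c$ is the smooth affine curve $2c_4c_1^2+2c_1=c_2^2+2ic_3c_2$ with the (at most) two points over $c_1=0$ removed; since (in the generic smooth case $c\notin\Delta$) the projective closure is a conic, i.e. rational, $R_c$ is $\C^*$ with punctures, and $\gamma_1$ can be taken to be the loop around one puncture corresponding to $c_1=0$, while $\gamma_2,\gamma_3$ are loops around the two points at infinity (where $c_2+ic_3 \to \infty$). I would show that the integrals over $\gamma_2$ and $\gamma_3$ are either zero or integer combinations of $(\mu,\nu)$ and $(2\pi,0)$ by a residue computation: $\frac{i}{c_1}dc_1$ has residue only at $c_1=0$, contributing to $\gamma_1$; $\frac{c_3}{c_1(c_2+ic_3)}dc_1$ and $\frac{c_1}{c_2+ic_3}dc_1$ have poles where $c_2+ic_3=0$, i.e. where $2c_4c_1^2+2c_1-c_3^2=0$, which are precisely the branch points of the curve over the $c_1$-line and hence do not correspond to punctures of $R_c$ — so on $R_c$ these forms are holomorphic near the relevant points and the only nonzero residue contributions localize on $\gamma_1$ (the $c_1=0$ loop). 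Carrying this out gives that every period over a class in $H_1(R_c,\Z)$ is an integer multiple of $(\mu,\nu)$ plus an integer multiple of $(2\pi,0)$.

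Combining the two pieces: $L_c$ is generated by $(2\pi,0)$ (fiber class) and $(\mu,\nu)$ (the class $\gamma_1$), and the remaining generators $\gamma_2,\gamma_3$ contribute nothing new. Finally I would check that $(\mu,\nu)$ and $(2\pi,0)$ are $\R$-linearly independent in $\C^2\cong\R^4$, so that $L_c$ is genuinely a rank-two lattice (discrete and spanning a real $2$-plane): this amounts to showing $\nu \neq 0$, which follows because $\frac{c_1}{c_2+ic_3}dc_1$ restricted to the smooth curve has nonzero period around the nontrivial cycle $\gamma_1$ — concretely one evaluates $\nu$ by residues at the point(s) where $c_2+ic_3$ vanishes and at $c_1=0$ and finds it nonzero whenever $c_3 \neq 0$ and $c\notin\Delta$. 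The main obstacle I anticipate is the bookkeeping in the $\C^*$-bundle argument: pinning down exactly which lifts of $\gamma_i$ to take, tracking the ambiguity by multiples of $\Gamma_0$, and confirming that $\gamma_2,\gamma_3$ (loops around the points at infinity of the conic) really do not produce a third independent period vector — this requires understanding the global topology of $R_c$ as a punctured rational curve and the precise location of the punctures, rather than any hard analysis.
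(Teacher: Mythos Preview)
Your overall strategy---trivialize $M_c\cong R_c\times\C^*$, read off the fiber period $(2\pi,0)$ from $-\tfrac{i}{\xi_1}\,d\xi_1$, and compute the remaining periods as residues on $R_c$---is exactly the paper's approach. But there is a genuine error in how you assign $\gamma_1,\gamma_2,\gamma_3$, and it breaks the computation.

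The theorem \emph{defines} $\gamma_1$ to be the generator of $H_1$ of the full smooth conic $2c_4c_1^2+2c_1=c_2^2+2ic_3c_2$ (before the points with $c_1=0$ are removed). That affine curve is homeomorphic to $\C^*$, so $\gamma_1$ is a loop separating its two points at infinity. The extra generators $\gamma_2,\gamma_3$ of $H_1(R_c,\Z)$ are then small loops around the two removed points $(0,0)$ and $(0,-2ic_3)$ where $c_1=0$. You have this reversed: you take $\gamma_1$ around a $c_1=0$ puncture and $\gamma_2,\gamma_3$ at infinity. With your labeling, $\nu=\int_{\gamma_1}\omega_2$ would vanish, since $\omega_2=\tfrac{c_1}{c_2+ic_3}\,dc_1$ is holomorphic at both $c_1=0$ points when $c_3\neq0$; the lattice would then collapse to rank at most one, contradicting what you are trying to prove.

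Relatedly, your residue analysis is incomplete. You argue correctly that the forms are regular at the branch points $c_2+ic_3=0$, and then conclude that ``the only nonzero residue contributions localize on $\gamma_1$ (the $c_1=0$ loop)''. But you never examine the points at infinity of the conic. In fact $\omega_2$ has poles there (a short expansion in $t=1/c_1$ gives residues $\pm(2c_4)^{-3/2}$ on the two branches), and that is precisely where the nonzero $\nu$ comes from. The paper does the bookkeeping in the correct direction: it shows that the loops $\gamma_2,\gamma_3$ around $(0,0)$ and $(0,-2ic_3)$ contribute $(-4\pi,0)$ and $(0,0)$ respectively---both integer multiples of $(2\pi,0)$---while the only new vector $(\mu,\nu)$ comes from the ``infinity'' cycle $\gamma_1$. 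Once you swap the roles back, your outline becomes essentially the paper's proof.
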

\begin{proof}
The point $(c_1,c_2)=(0,-2ic_3)$ is the removable singularity of the differential $\left(\frac{i}{c_1}+\frac{-c_3}{c_1(c_2+ic_3)}\right)dc_1$ on $R_c$, then \[\int_{\gamma_3}\left(\frac{i}{c_1}+\frac{-c_3}{c_1(c_2+ic_3)}\right)dc_1 =0.\] The point $(c_1,c_2)=(0,0)$ is the pole of the differential $\left(\frac{i}{c_1}+\frac{-c_3}{c_1(c_2+ic_3)}\right)dc_1$ on $R_c$ which is first order. If $c_3\neq 0$, the residue at the pole $(0,0)$ is $2i$, then \[\int_{\gamma_2}\left(\frac{i}{c_1}+\frac{-c_3}{c_1(c_2+ic_3)}\right)dc_1 =-4\pi.\] If $c_3=0$, the residue at the pole $(0,0)$ is $i$, then \[\int_{\gamma_2}\left(\frac{i}{c_1}+\frac{-c_3}{c_1(c_2+ic_3)}\right)dc_1 =-2\pi.\]

The map \[M_c\rightarrow R_c\times\C^*:((\xi_1,\xi_2)_k,(\eta_1,\eta_2))\mapsto (c_1,c_2,\xi_1)\] is a diffeomorphism. Let $\Gamma$ be any closed path in $M_c$. Without loss of generality we can assume that $\Gamma$ lies in the $M_c\cap V$. The image in $R_c$ of the curve $\Gamma$ under the projection $\pi$ is homologous to $n_1\gamma_1+n_2\gamma_2+n_3\gamma_3$. Hence \[\int_{\Gamma}\omega_2=\int_{\pi(\Gamma)}\frac{c_1}{c_2+ic_3}dc_1=n_1\int_{\gamma_1}\frac{c_1}{c_2+ic_3}dc_1+n_2\int_{\gamma_2}\frac{c_1}{c_2+ic_3}dc_1+n_3\int_{\gamma_3}\frac{c_1}{c_2+ic_3}dc_1=n_1\nu,\] and
\begin{align*}
\int_{\Gamma}\omega_1=&\int_{\Gamma}\frac{-i}{\xi_1}d\xi_1+\int_{\pi(\Gamma)}\left(\frac{i}{c_1}+\frac{-c_3}{c_1(c_2+ic_3)}\right)dc_1\\
=&\int_{\Gamma}\frac{-i}{\xi_1}d\xi_1+n_1\int_{\gamma_1}\left(\frac{i}{c_1}+\frac{-c_3}{c_1(c_2+ic_3)}\right)dc_1+n_2\int_{\gamma_2}\left(\frac{i}{c_1}+\frac{-c_3}{c_1(c_2+ic_3)}\right)dc_1\\
&+n_3\int_{\gamma_3}\left(\frac{i}{c_1}+\frac{-c_3}{c_1(c_2+ic_3)}\right)dc_1\\
=&n_1\mu+2m\pi,
\end{align*}
where $m$ is some integer. The last equality holds because we can let $\Gamma$ encircle $\xi_1=0$ as many times as we want. This ends the proof.
\end{proof}
%

\begin{section}
  {Monodromy}
\end{section}
In this section we show that the complexified planar Kepler problem has monodromy.  We will prove that the monodromies are due to chasing the nontrivial loops around the discriminant on the one hand and the $c_3$ axis on the other where the fiber $M_c$ changes its topology.

\subsection{Monodromy around the discriminant \texorpdfstring{$\Delta$}{Delta}}

More precisely, we will show that there is a noncontractible loop $\Gamma$ in $\C^2\setminus\Delta$ such that the bundle of period lattices \[\phi:\coprod_{c\in\Gamma}L_c\rightarrow\Gamma\] over $\Gamma$ has classifying map given by
$\begin{pmatrix}
-1 & 0 \\ -2 & 1
\end{pmatrix}$.
\begin{theorem}\label{monodromy}
Consider the periods of the differential forms $\omega_1$ and $\omega_2$ on $R_c$ as analytic functions of $c=(c_3,c_4)$. Then there is a closed path $\Gamma$ in the $c$-space $\C^2\setminus\Delta$ such that after analytic continuation along one circuit of $\Gamma$, the period of $\alpha=\left(\frac{i}{c_1}+\frac{-c_3}{c_1(c_2+ic_3)}\right)dc_1$ turns into $-4\pi-\int_\gamma\alpha$ and the period of $\omega_2$ changes the sign.
\end{theorem}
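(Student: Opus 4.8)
The monodromy here will come from a loop that, while being a meridian of $\Delta$ in $\C^2\setminus\Delta$, also links the locus $\{c_4=0\}$ where the fibre curve changes type; the plan is to identify the homological effect of this linking and feed it the period values already found in \S4.

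First I would put the fibre curve in a convenient form. Writing $w=c_2+ic_3$, the curve $2c_4c_1^2+2c_1-c_3^2=(c_2+ic_3)^2$ becomes $w^2=2c_4c_1^2+2c_1-c_3^2$, a double cover of the $c_1$--line branched over the two roots $r(c),s(c)$ of the right--hand side. Compactify to $\widehat{C}_c\cong\mathbb{P}^{1}$ by adjoining the two points $\infty_\pm$ over $c_1=\infty$, characterised by $w/c_1\to\pm\sqrt{2c_4}$; the two points removed from the smooth curve $C_c$ to form $R_c$ are $P_0=(0,0)$ and $P_1=(0,-2ic_3)$ over $c_1=0$, sitting at $w=\pm ic_3$. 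Thus $R_c$ is a four--punctured sphere. In it $\gamma_1$ (a generator of $H_1(C_c)=H_1(\C^*)=\Z$, realised as the cut cycle over the segment $[r,s]$) is homologous to $[\infty_+]+\gamma_2$ for a suitable labelling of $P_0,P_1$ and orientation of $\gamma_1$, and $[\infty_+]+[\infty_-]+\gamma_2+\gamma_3=0$, where $\gamma_2,\gamma_3$ are the positively oriented small loops about $P_0,P_1$ fixed in \S4. The point of this picture is that the $\gamma_1$--periods of $\alpha$ and of $\omega_2$ over $R_c$ are governed by the monodromy action on $\gamma_1,\gamma_2,\gamma_3$, hence by the motion of these four distinguished points; besides $\Delta$ (where $r=s$), this motion degenerates along $\{c_4=0\}$ --- precisely where $C_c$ changes type from $\C^*$ to $\C$ because $\infty_+$ and $\infty_-$ collide.

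For the loop, I would fix $c_3=c_3^0\neq0$ and take $\Gamma$ to be the circle $|c_4|=\rho$ traversed once counterclockwise with $\rho>\tfrac{1}{2|c_3^0|^2}$. This $\Gamma$ avoids both $c_4=0$ and the discriminant point $c_4=-\tfrac{1}{2(c_3^0)^2}$, so it lies in $\C^2\setminus\Delta$ and is in fact a meridian of $\Delta$ there (hence noncontractible), while simultaneously it links $\{c_4=0\}$; along $\Gamma$ the curves $R_c$ stay smooth of constant topological type, so $\alpha$, $\omega_2$ and their periods continue analytically. Over one circuit of $\Gamma$, $\sqrt{2c_4}$ is negated and $\sqrt{1+2c_4(c_3^0)^2}$ is negated. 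The first forces the two sheets of $C_c$ near $c_1=\infty$ to be swapped, so the monodromy transposes $\infty_+\leftrightarrow\infty_-$; the second transposes $r\leftrightarrow s$, a half--twist of the branch points which lifts to a Dehn twist about $\gamma_1$ and acts trivially on the homology of the rational curve. The points $P_0,P_1$ stay put (their coordinates $w=\pm ic_3^0$ at $c_1=0$ do not involve $c_4$). Hence the monodromy fixes $\gamma_2$ and $\gamma_3$, sends $[\infty_+]$ to $[\infty_-]$, and combining this with $\gamma_1=[\infty_+]+\gamma_2$ and $[\infty_+]+[\infty_-]+\gamma_2+\gamma_3=0$ gives $\gamma_1\mapsto-\gamma_1+\gamma_2-\gamma_3$. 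Consequently the $\gamma_1$--period of any closed $1$--form $\beta$ on $R_c$ becomes $-\int_{\gamma_1}\beta+\int_{\gamma_2}\beta-\int_{\gamma_3}\beta$. (As a cross--check one can argue directly on $\widehat{C}_c$: $\omega_2$ is meromorphic with poles only at $\infty_\pm$ and $\int_{\gamma_1}\omega_2=\pm\pi i/(c_4\sqrt{2c_4})$, while $\int_{\gamma_1}\alpha$ is a constant in $2\pi\Z$; both exhibit exactly this monodromy.)

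Finally I would plug in the periods from \S4, namely $\int_{\gamma_2}\alpha=-4\pi$, $\int_{\gamma_3}\alpha=0$ and $\int_{\gamma_2}\omega_2=\int_{\gamma_3}\omega_2=0$: then $\int_{\gamma_1}\alpha\mapsto-\int_{\gamma_1}\alpha-4\pi=-4\pi-\int_{\gamma_1}\alpha$ and $\int_{\gamma_1}\omega_2\mapsto-\int_{\gamma_1}\omega_2$, which is the claim. On $L_c=\Z(\mu,\nu)\oplus\Z(2\pi,0)$, the fibrewise $\xi_1$--cycle being untouched, this reads $(\mu,\nu)\mapsto-(\mu,\nu)-2(2\pi,0)$ and $(2\pi,0)\mapsto(2\pi,0)$, i.e.\ the classifying map $\begin{pmatrix}-1 & 0 \\ -2 & 1\end{pmatrix}$. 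The hard part will be the penultimate step: making the homological effect of the transposition $\infty_+\leftrightarrow\infty_-$ precise --- in particular which of $P_0,P_1$ pairs with $\infty_+$ along the cut, and the exact signs $-\gamma_1+\gamma_2-\gamma_3$ rather than another $\pm$--combination. This needs either careful tracking of the branches of $\sqrt{2c_4c_1^2+2c_1-c_3^2}$ and of how the cut $[r,s]$ moves as $c_4$ encircles $0$, or a branch--careful evaluation of the two period integrals; it is also where one verifies that the $\Delta$--meridian part of $\Gamma$ (the $r\leftrightarrow s$ transposition) contributes nothing to $H_1$, so that the whole monodromy comes from linking $\{c_4=0\}$.
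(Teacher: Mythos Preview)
Your approach is correct and genuinely different from the paper's.  The paper proceeds by an explicit coordinate normalization: it substitutes $c_1=(s-r)u+r$, $c_2+ic_3=(s-r)\sqrt{c_4}\,v$ to put every fibre in the fixed form $2u(u-1)=v^2$, takes $\Gamma$ to be a \emph{small} circle about the discriminant point $(1,-\tfrac12)$ in the slice $c_3=1$ (so the paper's $\Gamma$ does \emph{not} link $\{c_4=0\}$), keeps $\gamma$ as a fixed large $u$--circle, and tracks the sign of the coefficient $\sqrt{c_4}/\sqrt{1+2c_4}$ along $\Gamma$ while asserting the remaining $u$--integral is unchanged.  You instead work homologically on the four--punctured sphere $\widehat C_c\setminus\{P_0,P_1,\infty_\pm\}$, choose a \emph{large} circle $|c_4|=\rho$ that links both $\Delta$ and $\{c_4=0\}$, and read off the monodromy from the permutation of punctures combined with the $\gamma_2,\gamma_3$--periods already computed in \S4.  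The paper's method is purely computational and needs no topology of the fibre beyond fixing one cycle in a normalized model; your method is structural and explains \emph{why} the monodromy matrix is what it is---the transposition $\infty_+\leftrightarrow\infty_-$ forced by circling $c_4=0$, while the $r\leftrightarrow s$ half--twist is only a Dehn twist along the separating curve $\gamma_1$, hence trivial on $H_1$.  The closed formula $\int_{\gamma_1}\omega_2=\pi i\big/(c_4\sqrt{2c_4})$ you quote is a useful cross--check here: it is single--valued near the discriminant point and picks up its sign only on encircling $c_4=0$, so your attribution of the whole effect to the $\{c_4=0\}$--linking (and hence the kinship of your argument with the paper's Theorem~5.2 rather than 5.1) is on the mark.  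Your stated ``hard part''---fixing which of $P_0,P_1$ sits with $\infty_+$ across the cut and nailing the signs in $\gamma_1\mapsto-\gamma_1+\gamma_2-\gamma_3$---is indeed where the care goes; once that bookkeeping is done, the substitution of the \S4 residue values finishes the proof exactly as you outline.
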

\begin{proof}
The idea of the proof is that we normalize the curve $R_c$ into $2u(u-1)=v^2$, then we find the change of the periods of $\alpha$ and $\omega_2$ along a well chosen closed path $\Gamma$ in $\C^2\setminus\Delta$.

If $(c_3,c_4)\notin\C^2\setminus\Delta$ and $c_4\neq 0$, the curve has the form \[2c_4(c_1-r)(c_1-s)=(c_2+ic_3)^2,\] where $r=\frac{-1-\sqrt{1+2c_3^2c_4}}{2c_4}$, $s=\frac{-1+\sqrt{1+2c_3^2c_4}}{2c_4}$. Taking the substitution \[c_1=(s-r)u+r,\ c_2=(s-r)\sqrt{c_4} v-ic_3\] transforms the equation $2c_4(c_1-r)(c_1-s)=(c_2+ic_3)^2$ into the equation $2u(u-1)=v^2$. Under the same transformation, the $1$-forms $\alpha$ and $\omega_2$ are changed into \[\alpha=\left(\frac{i}{u-\frac{-r}{s-r}}+\frac{-c_3}{\sqrt{c_4}(s-r)}\frac{1}{v(u-\frac{-r}{s-r})}\right)du,\ \omega_2=\left(\frac{s-r}{\sqrt{c_4}}\frac{u-\frac{-r}{s-r}}{v}\right)du\] and the pole $(0,0)$ of $\alpha$ will be transformed to the point $(\frac{-r}{s-r},\frac{ic_3}{(s-r)\sqrt{c_4}})$.

Take a point $(1,\frac{-1}{2})\in\Delta$ and a circle $\Gamma$ with the center $(1,\frac{-1}{2})$ and the radius $r_1<\frac{1}{2}$ in the complex plane $\{(c_3,c_4)\in\C^2\,|\, c_3=1\}$. Then on closed path $\Gamma$ the period of $\alpha$ and $\omega_2$ will be \[\int_\gamma\alpha=\int_\gamma\left(\frac{i}{u-(\frac{1}{2}+\frac{1}{2\sqrt{1+2c_4}})}+\frac{-\sqrt{c_4}}{\sqrt{1+2c_4}}\frac{1}{\sqrt{2u(u-1)}}\frac{1}{u-(\frac{1}{2}+\frac{1}{2\sqrt{1+2c_4}})}\right)du\] and \[\int_\gamma\omega_2=\int_\gamma\left(\frac{\sqrt{1+2c_4}}{c_4\sqrt{c_4}}\frac{u-(\frac{1}{2}+\frac{1}{2\sqrt{1+2c_4}})}{\sqrt{2u(u-1)}}\right)du.\] As shown in Figure \ref{fig_b}, the pole $\frac{1}{2}+\frac{1}{2\sqrt{1+2c_4}}$ of $\alpha$ will go along a semicircle on the circle with the radius $\frac{1}{\sqrt{r_1}}$ and the center $\frac{1}{2}$ in the $u$-plane, when we let $c_4$ go around $\Gamma$ once. By choosing $\gamma$ as a circle with the center $\frac{1}{2}$ and the radius larger than $\frac{1}{\sqrt{r_1}}$, we get \[\int_\gamma\alpha=-2\pi+\frac{-\sqrt{c_4}}{\sqrt{1+2c_4}}\int_\gamma\left(\frac{1}{\sqrt{2u(u-1)}}\frac{1}{u-(\frac{1}{2}+\frac{1}{2\sqrt{1+2c_4}})}\right)du.\] When $c_4$ goes through $\Gamma$ once, the integral $\int_\gamma\left(\frac{1}{\sqrt{2u(u-1)}}\frac{1}{u-(\frac{1}{2}+\frac{1}{2\sqrt{1+2c_4}})}\right)du$ does not change but $\frac{-\sqrt{c_4}}{\sqrt{1+2c_4}}$ changes sign, then $\int_\gamma\alpha$ turns into $-4\pi-\int_\gamma\alpha$. For the same reason, $\int_\gamma\omega_2$ changes the sign after analytic continuation.
\begin{figure}
\centering
\includegraphics[scale=0.6]{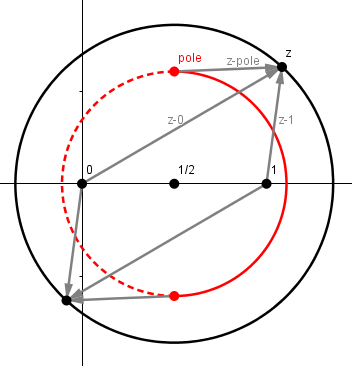}
\caption{Variation of periods in $u$-plane: encompassing the discriminant. The red curve is the loci of the poles when traveling along $\Gamma$, and the outer circle is the common cycle $\gamma$ to calculate the family of periods.}\label{fig_b}
\end{figure}
\end{proof}

\subsection{Monodromy around \texorpdfstring{$c_3$}{c3}}

In this subsection we will show that there is a noncontractible loop $\Gamma$ in $\C^2\setminus\{(c_3,c_4=0):c_3\in\C\}$ such that the bundle of period lattices \[\phi:\coprod_{c\in\Gamma}L_c\rightarrow\Gamma\] over $\Gamma$ has classifying map given by
$\begin{pmatrix}
-1 & 0 \\ -2 & 1
\end{pmatrix}$.

\begin{theorem}
Consider the periods of the differential forms $\omega_1$ and $\omega_2$ on $R_c$ as analytic functions of $c=(c_3,c_4)$. Then there is a loop $\Gamma$ in the $c$-space $\C^2\setminus\{(c_3,0):c_3\in\C\}$ such that after analytic continuation along one circuit of $\Gamma$, the period of $\alpha=\left(\frac{i}{c_1}+\frac{-c_3}{c_1(c_2+ic_3)}\right)dc_1$ turns into $-4\pi-\int_\gamma\alpha$ and the period of $\omega_2$ changes the sign.
\end{theorem}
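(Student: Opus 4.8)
The plan is to repeat, almost verbatim, the normalization used in the proof of Theorem~\ref{monodromy}, but to choose the loop $\Gamma$ so that it encircles a point of the degeneracy locus $\{c_4=0\}$ (the $c_3$-axis) instead of a point of the discriminant $\Delta$. Along a loop of this kind it is the function $\sqrt{c_4}$ whose branch is nontrivial, while $\sqrt{1+2c_3^2c_4}$ stays single-valued; and after the substitution $c_1=(s-r)u+r$, $c_2=(s-r)\sqrt{c_4}\,v-ic_3$ which brings $R_c$ to $2u(u-1)=v^2$, the coefficients sitting in front of the ``interesting'' parts of $\alpha$ and of $\omega_2$ are precisely the ones that are odd in $\sqrt{c_4}$. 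This is exactly what produces the two sign changes.

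Concretely, fix $c_3=1$ and let $\Gamma=\{(1,c_4):|c_4|=r_1\}$ with $0<r_1<\tfrac12$; this loop is noncontractible in $\C^2\setminus\{c_4=0\}$ and, since $\Delta\cap\{c_3=1\}=\{c_4=-\tfrac12\}$, it also avoids $\Delta$, so $R_c$ stays smooth along $\Gamma$. Writing $p=\tfrac12+\tfrac{1}{2\sqrt{1+2c_4}}$ and applying the substitution from the proof of Theorem~\ref{monodromy}, the $1$-forms $\alpha$ and $\omega_2$ become
\[
\alpha=\left(\frac{i}{u-p}+\frac{-\sqrt{c_4}}{\sqrt{1+2c_4}}\,\frac{1}{\sqrt{2u(u-1)}}\,\frac{1}{u-p}\right)du,\qquad
\omega_2=\frac{\sqrt{1+2c_4}}{c_4\sqrt{c_4}}\,\frac{u-p}{\sqrt{2u(u-1)}}\,du,
\]
and the pole $(c_1,c_2)=(0,0)$ of $\alpha$ is carried to $u=p$, with $p\to 1$ as $c_4\to 0$. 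As $c_4$ runs once around $\Gamma$ the quantity $1+2c_4$ stays in a small disk about $1$, so $\sqrt{1+2c_4}$ and hence $p=p(c_4)$ return to their starting values, the pole $p$ tracing a small loop encircling the branch point $u=1$ once, whereas $\sqrt{c_4}$ changes sign. Take $\gamma$ to be the cycle $\gamma_1$ on $2u(u-1)=v^2$ lying over a fixed circle $|u-\tfrac12|=\rho$ chosen large enough to contain $u=0$, $u=1$ and the whole $p$-locus; since the latter is confined to a small neighbourhood of $u=1$, a fixed moderate $\rho$ works (this is simpler than in Theorem~\ref{monodromy}, where the poles swept a semicircle of radius $\tfrac{1}{\sqrt{r_1}}$).

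It remains to track the circuit. The part $\int_\gamma\frac{i\,du}{u-p}$ of $\int_\gamma\alpha$ equals $-2\pi$, exactly as in Theorem~\ref{monodromy}, and is unchanged: $\gamma$ and the choice of sheet are fixed, and $p$ merely orbits $u=1$ without ever crossing $\gamma$. The remaining term of $\int_\gamma\alpha$ and the whole of $\int_\gamma\omega_2$ are each a power of $\sqrt{c_4}$ times an integral over the fixed cycle $\gamma$ of an expression that does not feel the $\sqrt{c_4}$-branch (the contour is fixed, $\sqrt{2u(u-1)}$ is independent of $c_4$, and $p$ returns), so both of these prefactors flip sign. Hence, writing $\int_\gamma\alpha=-2\pi+S$, after the circuit $\int_\gamma\alpha$ becomes $-2\pi-S=-4\pi-\int_\gamma\alpha$, while $\int_\gamma\omega_2$ becomes $-\int_\gamma\omega_2$. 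In the $\Z$-basis $(\mu,\nu)$, $(2\pi,0)$ of $L_c$ found above ---the generator $(2\pi,0)$ coming from loops around $\xi_1=0$ and therefore insensitive to $c$--- this transformation is precisely the classifying map $\left(\begin{smallmatrix}-1&0\\-2&1\end{smallmatrix}\right)$ announced at the beginning of this subsection.

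The part that needs genuine care is the homological bookkeeping after normalization: one must verify that $\gamma_1$ can indeed be represented over a fixed circle $|u-\tfrac12|=\rho$, so that the moving pole $p(c_4)$ stays inside it, never crosses it, and never hits $u=0$ or $u=1$ ---here the bound $r_1<\tfrac12$, keeping $\Gamma$ off $\Delta$ and $\sqrt{1+2c_4}$ single-valued, is exactly what is used--- and that the swap of the two points over $u=p$ caused by $p$ encircling $u=1$ does not affect these periods, and that with a consistent orientation and choice of sheet on $2u(u-1)=v^2$ the residue count gives exactly $-2\pi$, so that the analytic continuation yields $-4\pi-\int_\gamma\alpha$ rather than some other affine reflection. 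These are the same checks already performed in the proof of Theorem~\ref{monodromy}, transported to the loop around the $c_3$-axis.
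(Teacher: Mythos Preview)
Your proof is correct and follows essentially the same route as the paper's: fix $c_3=1$, take a small circle $\Gamma$ in $c_4$ around $0$ (avoiding $c_4=-\tfrac12$), apply the same substitution to normalize $R_c$ to $2u(u-1)=v^2$, observe that along $\Gamma$ the pole $p=\tfrac12+\tfrac{1}{2\sqrt{1+2c_4}}$ makes a small closed loop about $u=1$ while $\sqrt{c_4}$ flips sign, and read off the effect on the two periods using a fixed large cycle $\gamma$. Your write-up is in fact more explicit than the paper's on the homological bookkeeping (why $\gamma$ can be kept fixed, why the sheet-swap over $u=p$ is harmless, and why the residue term stays equal to $-2\pi$), but the argument is the same.
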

\begin{proof}
The proof is similar to that of the Theorem \ref{monodromy}, and we only sketch the key points and the main differences.

The same transformation as constructed in the proof of the Theorem \ref{monodromy} works also here. Take the point $(1,0)$ in $\{(c_3,0):c_3\in\C\}$ and a circle $\Gamma$ with the center $(1,0)$ and the radius smaller than $\frac{1}{2}$ in the complex plane $c_3=1$. On $\Gamma$, the period of $\alpha$ and $\omega_2$ will be \[\int_\gamma\alpha=\int_\gamma\left(\frac{i}{u-(\frac{1}{2}+\frac{1}{2\sqrt{1+2c_4}})}+\frac{-\sqrt{c_4}}{\sqrt{1+2c_4}}\frac{1}{\sqrt{2u(u-1)}}\frac{1}{u-(\frac{1}{2}+\frac{1}{2\sqrt{1+2c_4}})}\right)du\] and \[\int_\gamma\omega_2=\int_\gamma\left(\frac{\sqrt{1+2c_4}}{c_4\sqrt{c_4}}\frac{u-(\frac{1}{2}+\frac{1}{2\sqrt{1+2c_4}})}{\sqrt{2u(u-1)}}\right)du.\] As shown in Figure \ref{fig_c}, the pole $\frac{1}{2}+\frac{1}{2\sqrt{1+2c_4}}$ of $\alpha$ goes along a closed loop which encircles $1$, when we let $c_4$ go around $\Gamma$ once. By choosing $\gamma$ as a circle with sufficiently large radius, we get \[\int_\gamma\alpha=-2\pi+\frac{-\sqrt{c_4}}{\sqrt{1+2c_4}}\int_\gamma\left(\frac{1}{\sqrt{2u(u-1)}}\frac{1}{u-(\frac{1}{2}+\frac{1}{2\sqrt{1+2c_4}})}\right)du.\] When $c_4$ goes through $\Gamma$ once, the integral $\int_\gamma\left(\frac{1}{\sqrt{2u(u-1)}}\frac{1}{u-(\frac{1}{2}+\frac{1}{2\sqrt{1+2c_4}})}\right)du$ does not change but $\frac{-\sqrt{c_4}}{\sqrt{1+2c_4}}$ changes sign, then $\int_\gamma\alpha$ turns into $-4\pi-\int_\gamma\alpha$. For the same reason, $\int_\gamma\omega_2$ changes the sign after analytic continuation.
\begin{figure}
\centering
\includegraphics[scale=0.7]{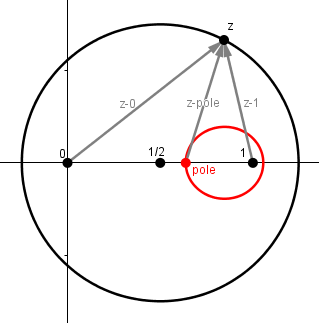}
\caption{Variation of periods in $u$-plane: encompassing the $c_3$-axis.}\label{fig_c}
\end{figure}
\end{proof}

\section{Conclusions and outlooks}

As we have seen, the monodromies are due to the existence of the discriminant loci where the level curves $M_c$ have a singularity and the degeneracy of the level curve along the $c_3$ coordinate axis in $c=(c_3,c_4)$ space. In each case, we construct a loop turning around the singular loci and show that there is nontrivial monodromy. We observed that the monodromy are the same for the both cases, and in fact we can construct a loop encompassing both singular loci with trivial monodromy (i.e., the monodromy effects cancelled due to algebraic additivity of the periods). We tend to believe that the fundamental group of the singular loci
$$ \pi_1(\C^2\backslash (\{(c_3,c_4)\in\C^2\,|\, 1+2c_4c_3^2=0\}\cup \{(c_3,c_4=0)\in \C^2\,|\, c_3\in\C\}))\cong \Z^2,$$
although we could not establish it at present. If this is the case, then the monodromy group would be
$$\Z_2\oplus\Z_2.$$

The next step would be to understand the dynamical meaning of the nontrivial monodromy. To name a few, is there any quantum mechanics implications as suggested by quantum molecular spetroscopy? Is there any physical explanation about the unipotent property of the monodromy matrix?

The monodromy of the complexified Kepler problem shows the potential to the relation with the quantum behavior. Maybe we can benefit more from the complexification point of view, for example the unification of the positive and negative orbits and the regularization problems which we left for the future research.

An even more interesting question would be about complexified $3$-dimensional Kepler problem. As is well known the classical spatial Kepler problem has one more constant of motions the eccentricity or Laplace-Runge-Lenz vector due to the hidden $SO(4)$ symmetry which has far reaching implications about the spectrum of the hydrogen atoms.

Notice that the phase space of the complexified planar Kepler problem as well as any complexified classical mechanical system has beautiful geometry and is a holomorphic symplectic manifold a.k.a. hyperK\"{a}hler manifold (\cite{Bea}). The interaction between the dynamics of the holomorphic Newton equations and the geometry of the underlying phase space would be a fascinating direction to pursue.

\end{document}